\documentclass[12pt,a4paper]{article}

\usepackage{amsmath,amssymb,amsthm}
\usepackage[margin=2.5cm]{geometry}
\usepackage[colorlinks=true,linkcolor=blue,citecolor=blue,urlcolor=blue]{hyperref}
\usepackage{enumitem}
\usepackage{graphicx}

\theoremstyle{plain}
\newtheorem{theorem}{Theorem}[section]
\newtheorem{proposition}[theorem]{Proposition}

\theoremstyle{definition}

\theoremstyle{remark}
\newtheorem{remark}[theorem]{Remark}

\newcommand{\bx}{\mathbf{x}}

\title{Curvature Blindness from Polarity Breaks and\\
Orientation Channel Fragmentation in V1}

\author{Michael Menke}
\date{}

\begin{document}
\maketitle

\begin{abstract}
\noindent
We present a mathematical model of the curvature blindness
illusion in which sinusoids appear as
angular zigzags when drawn with alternating contrast polarity
against a gray background.
\\
\\
The model identifies two complementary mechanisms, both
operating in V1.  First, \emph{polarity channel separation}:
simple cells are selective for contrast polarity, and lateral
connections link only same polarity neurons; where the line
switches from darker than background to lighter than background
at each peak and trough, the encoding population changes and
the lateral chain is broken, segmenting the contour into
half-wavelength pieces.  Second, \emph{orientation channel
fragmentation}: at moderate contrast, the active orientation
window is narrow, and within each half-wavelength segment
no single orientation channel spans the full range of edge
normals; the inflection point at the center of each segment
anchors a locally straight percept.  Together, the two
mechanisms produce a zigzag: polarity breaks supply the
corners, and fragmentation straightens the segments between
them.
\\
\\
The model yields three necessary conditions for the illusion:
(i)~the curve must have contrast polarity reversals (to
create segment boundaries); (ii)~the contrast must be moderate
(to narrow orientation channels); and (iii)~the curve must have
inflection points between successive polarity reversals (to
provide locally straight anchors).  Condition~(iii) predicts
that the illusion should generalize to any polarity alternating
curve with sign-changing curvature (S-curves, damped
oscillations).
\end{abstract}

\section{Introduction}\label{sec:intro}

\subsection{The illusion}

The curvature blindness illusion, discovered by
Takahashi~\cite{takahashi17}, is a failure of curvature
perception.  In the standard stimulus, a sinusoidal line is
drawn against a gray background with its luminance alternating
between darker than background and lighter than background,
the switch occurring at each peak and trough.  The line appears
to have sharp corners at its peaks and troughs, as if it were
a zigzag.  The same line is correctly perceived as smooth at
high contrast against a white or black
background~\cite{takahashi17}.  The illusion requires that amplitude cannot be too large: large amplitude sinusoids are perceived as curved
even on a gray background~\cite{takahashi17}. Bertamini and
Kitaoka~\cite{bertamini20} suggested informally that
corner detecting mechanisms may be responsible, but did not
provide equations or testable conditions.

\begin{figure}[ht]
\centering
\includegraphics[width=0.6\textwidth]{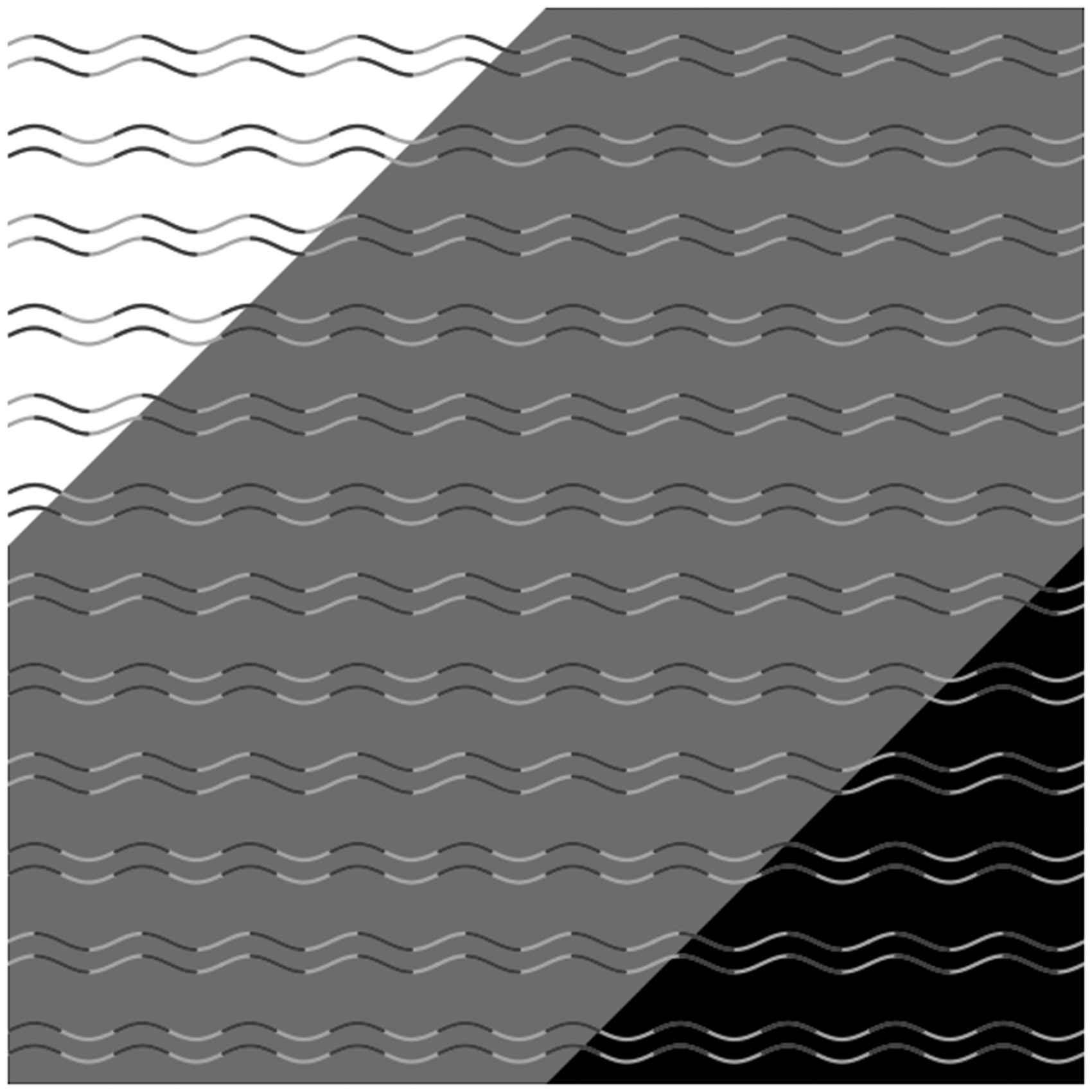}
\caption{The curvature blindness illusion (from
Takahashi~\cite{takahashi17}).  Sinusoidal curves with
alternating dark/light segments appear smooth against the
white and black backgrounds, but
appear as angular zigzags against the gray
background.}
\label{fig:illusion}
\end{figure}

\subsection{The main ideas}

We identify two V1 mechanisms that jointly produce
the illusion.
\\
\\
\emph{Polarity channel separation.}  V1 simple cells are
selective for contrast polarity: a cell that responds to a
dark to light edge does not respond to a light to dark edge at
the same orientation~\cite{hubel68}.  Lateral connections in V1
preferentially link neurons of matched polarity as well as
matched orientation~\cite{bosking97,field93}.
Where the stimulus line switches from
darker than background to lighter than background, the
encoding population changes entirely, and the lateral chain
that mediates contour integration is broken.  In the
Takahashi stimulus, these polarity switches occur precisely
at the peaks and troughs of the sinusoid, segmenting the
contour into half wavelength pieces.
\\
\\
\emph{Orientation channel fragmentation.}  Perceiving
curvature within a segment requires comparing edge orientations
across nearby positions, via lateral connections linking neurons
with matched preferred orientations~\cite{bosking97}.  At
moderate contrast, each edge point activates only a narrow range
of orientations (the \emph{active window}), and the window
center shifts as the edge normal rotates along the curve.  When
the total rotation within a half wavelength segment exceeds the
window width, no single orientation channel spans the full
segment and the representation fragments into short, nearly
straight pieces.  The inflection point at the center of each
segment, where curvature is zero and the tangent is steepest,
anchors the perceived straight line.
\\
\\
Together, the two mechanisms produce a zigzag: polarity breaks
at peaks and troughs supply the corners, and
orientation channel fragmentation straightens the segments.

\section{The Model}\label{sec:model}

\subsection{V1 orientation tuned edge detectors}

Each V1 hypercolumn at retinal position $\bx$ contains simple
cells tuned to every orientation
$\theta \in [0, \pi)$~\cite{hubel68}.  Each cell responds
primarily to luminance gradients aligned with its preferred
direction.  The key properties of simple cells relevant to our
model are:

\begin{enumerate}[label=(\roman*)]
\item \emph{Oriented linear filtering.}
  Simple cell receptive fields are well described by Gabor
  functions~\cite{jones87}.  For an odd symmetric
  (edge detecting) cell, the linear response to a luminance
  edge is proportional to the projection of the image gradient
  onto the cell's preferred direction.

\item \emph{Contrast polarity selectivity.}
  Simple cells are phase sensitive: an odd symmetric cell
  responds with opposite sign to opposite polarity
  edges~\cite{hubel68}.  After rectification, the population
  encoding a dark to light edge at orientation $\theta$ is
  distinct from the population encoding a light to dark edge
  at the same orientation.  We label these the $+$ and $-$
  polarity populations.

\item \emph{Gaussian orientation tuning.}
  The orientation tuning curve is approximately Gaussian, with
  a half-width at half-max (HWHM) of
  $\sim 20^\circ$~\cite{sclar82,ringach02}.  The corresponding
  Gaussian parameter is
  $\sigma_\theta = \mathrm{HWHM}/\!\sqrt{2\ln 2}
  \approx 17^\circ \approx 0.30$~rad.
  This bandwidth is approximately invariant with stimulus
  contrast~\cite{sclar82}.

\item \emph{Contrast detection threshold.}
  Each neuron has a minimum contrast below which it produces
  no detectable response above its spontaneous
  firing~\cite{hubel68}.

\item \emph{Divisive normalization.}
  The contrast response is governed by divisive
  normalization~\cite{heeger92,carandini12}, well described by
  the Naka--Rushton equation $R(c) = R_{\max}\,c^n/(\sigma_c^n + c^n)$
  with exponent $n \approx 2$ and semi-saturation constant
  $\sigma_c \approx 0.15$ (Michelson contrast)~\cite{carandini12,albrecht82}.

\item \emph{Lateral connections.}
  Long range horizontal connections link neurons at nearby
  spatial positions whose preferred orientations match the
  axis connecting them~\cite{bosking97}.  Psychophysical
  evidence indicates that contour integration is
  polarity specific~\cite{field93}, consistent with lateral
  connections that preferentially link same polarity neurons.
\end{enumerate}

\subsection{The linear response at an edge}\label{sec:linear_response}

A \emph{step edge} is a sharp boundary between two regions of
different luminance. For instance, the border between a dark
line and a lighter background.  At every point along such a
boundary, two quantities characterize the edge: the
\emph{normal direction} $\theta_n$ (the direction perpendicular
to the boundary, pointing from the darker side to the lighter
side), and the \emph{Michelson contrast}
$c = (I_{\max} - I_{\min})/(I_{\max} + I_{\min})$, where
$I_{\max}$ and $I_{\min}$ are the luminances on the two sides.
\\
\\
The edge also has a \emph{polarity} $p \in \{+, -\}$,
determined by which side is brighter.  Consider a neuron whose
preferred direction $\hat{e}_\theta$ points in the direction of
increasing luminance across the edge.  Its linear response is
positive: the luminance gradient is aligned with the neuron's
preferred axis.  A neuron whose preferred direction points the
opposite way produces a
negative linear response of equal magnitude.  We call the first
neuron a $+$ polarity neuron (for this edge) and the second a
$-$ polarity neuron.
\\
\\
Combining the cosine projection (property~(i)) with the
Gaussian tuning envelope (property~(iii)), the linear response
of a $+$ polarity neuron tuned to orientation $\theta$ at an
edge with normal $\theta_n$ and contrast $c$ is
\begin{equation}\label{eq:linear_response}
  L(\theta;\, \theta_n, c)
  = c \;\cdot\; h(\theta - \theta_n),
\end{equation}
where
\begin{equation}\label{eq:h_def}
  h(\delta) \;=\;
  \cos\delta \;\cdot\;
  \exp\!\Bigl(-\frac{\delta^2}{2\sigma_\theta^2}\Bigr)
\end{equation}
is the \emph{orientation response profile}, with $\delta = \theta - \theta_n$ the angular offset between the
neuron's preferred orientation and the edge normal. The profile is the product of the
geometric cosine projection and the Gaussian tuning envelope.
The response is maximal when $\theta = \theta_n$ (neuron
aligned with the gradient) and falls to zero when $\theta$
deviates far from $\theta_n$.
\\
\\
A $-$ polarity neuron at the same edge produces the linear
response $-L(\theta;\, \theta_n, c)$: negative, because the
luminance gradient opposes its preferred direction.  However,
V1 simple cells undergo half-wave rectification before their
signals are transmitted: the firing rate of a neuron cannot go
below its spontaneous baseline, so negative linear responses
are clipped to zero~\cite{hubel68}.  After rectification, only
the $+$ polarity population produces a nonzero response to this
edge; the $-$ polarity population is silent.  At a nearby edge
of opposite polarity (e.g., the other side of a line, where
the luminance gradient reverses), the roles are exchanged: the
$-$ population responds and the $+$ population is silent.  This
is the physiological basis for the polarity selectivity stated
in property~(ii).

\subsection{Divisive normalization}\label{sec:normalization}

Following the standard
model~\cite{heeger92,carandini12}, the firing rate of a neuron
tuned to orientation $\theta$ (with matching polarity) at an
edge with normal $\theta_n$ and Michelson contrast $c$ is
\begin{equation}\label{eq:normalization}
  R(\theta;\, \theta_n, c)
  = R_{\max}\,\frac{
    \bigl[\,c \;\cdot\; h(\theta - \theta_n)\,\bigr]^n
  }{
    \sigma_c^{\,n} \;+\; c^{\,n}\,\mathcal{C}_n
  },
\end{equation}
where the normalization pool energy
\begin{equation}\label{eq:Cn}
  \mathcal{C}_n
  = \int_{-\pi/2}^{\pi/2}
    \bigl[\,h(\varphi)\,\bigr]^n \, d\varphi
\end{equation}
sums the $n$th-power responses across all orientation channels
driven by the same edge.  With $n = 2$ and
$\sigma_\theta \approx 0.30$~rad, numerical integration gives
$\mathcal{C}_2 \approx 0.50$.
\\
\\
$R_{\max}$ is a gain parameter; the actual peak firing rate
at optimal orientation and high contrast is
$R_{\max}/\mathcal{C}_n$.  With $R_{\max} \approx 50$, this
gives $\sim 100$~sp/s, consistent with V1
data~\cite{hubel68}.

\subsection{The active orientation window}\label{sec:active_window}

A neuron is \emph{active} when
$R(\theta) \geq r_{\mathrm{noise}}$, where
$r_{\mathrm{noise}}$ is the minimum detectable elevation
above baseline.  Define the relative noise threshold
$\rho = r_{\mathrm{noise}} / R_{\max} \approx 0.10$.
Substituting~\eqref{eq:normalization} into the activity
condition and simplifying:
\begin{equation}\label{eq:active_condition}
  h(\theta - \theta_n)
  \;\geq\;
  \Bigl[\,
    \rho\,\Bigl(\frac{\sigma_c^{\,n}}{c^n}
    + \mathcal{C}_n\Bigr)
  \,\Bigr]^{1/n}
  \;\equiv\;
  \tau_{\mathrm{eff}}(c).
\end{equation}

\begin{proposition}[Active orientation window]
\label{prop:active_window}
At an edge with normal $\theta_n$ and Michelson contrast $c$,
a neuron of matching polarity at orientation $\theta$ is
active if and only if $|\theta - \theta_n| \leq \alpha(c)$,
where
\begin{equation}\label{eq:alpha_full}
  h\bigl(\alpha(c)\bigr)
  = \tau_{\mathrm{eff}}(c),
\end{equation}
provided $\tau_{\mathrm{eff}}(c) \leq 1$.  If
$\tau_{\mathrm{eff}}(c) > 1$, no neuron is active and the
edge is invisible.  A neuron of the opposite polarity is never
active at this edge.
\end{proposition}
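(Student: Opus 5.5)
The argument reduces the activity condition to a level-set problem for the profile $h$ and then uses the monotonicity of $h$ on $[0,\pi/2]$. First I restate activity: by \eqref{eq:normalization}, $R(\theta)\ge r_{\mathrm{noise}}$ is equivalent to
\[
\bigl[c\,h(\theta-\theta_n)\bigr]^n \ge \rho\bigl(\sigma_c^{\,n}+c^n\mathcal{C}_n\bigr).
\]
The right-hand side is strictly positive, so an active neuron must have $h(\theta-\theta_n)>0$. Restricting to $\delta=\theta-\theta_n\in(-\pi/2,\pi/2)$, where $h>0$, the map $x\mapsto x^n$ is strictly increasing on $[0,\infty)$. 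Dividing by $c^n$ and taking $n$th roots therefore gives exactly \eqref{eq:active_condition}, $h(\delta)\ge\tau_{\mathrm{eff}}(c)$. For $|\delta|\ge\pi/2$ we have $h\le 0$: the response is rectified to zero (or, for $n=2$, the linear response has the wrong sign and the cell is really of the other polarity), so no such neuron is active. I will state this rectification convention explicitly, $R=R_{\max}[c\,h]_+^n/(\cdots)$, since the formula \eqref{eq:normalization} with even $n$ would otherwise be ambiguous.

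Next I show that $h$ is even and strictly decreasing in $|\delta|$ on $[0,\pi/2]$. Evenness is immediate. For $\delta\in(0,\pi/2)$, differentiating $\log h=\log\cos\delta-\delta^2/(2\sigma_\theta^2)$ gives $-\tan\delta-\delta/\sigma_\theta^2<0$. So $h$ decreases strictly from $h(0)=1$ to $h(\pi/2)=0$. By continuity and the intermediate value theorem, for every $\tau\in(0,1]$ there is a unique $\alpha\in[0,\pi/2)$ with $h(\alpha)=\tau$; this defines $\alpha(c)$ through \eqref{eq:alpha_full}, with $\tau_{\mathrm{eff}}>0$ automatic from $\rho,\sigma_c>0$. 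Strict monotonicity then yields $h(\delta)\ge\tau_{\mathrm{eff}}\iff|\delta|\le\alpha(c)$, which is the claimed window. If $\tau_{\mathrm{eff}}(c)>1$, then since $\max h=h(0)=1$ the inequality fails for every $\theta$, so no neuron is active.

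For the opposite-polarity population, the linear response is $-L=-c\,h(\delta)$. This is $\le 0$ on the window region and only positive where $h<0$, i.e.\ for $|\delta|>\pi/2$. Those orientations correspond to the same cells' preferred axis reversed, and this is handled by the convention in the first paragraph. After half-wave rectification, then, the drive to the opposite-polarity population inside the relevant range is zero, so it never reaches $r_{\mathrm{noise}}>0$.

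The main obstacle is not the analysis but bookkeeping about orientation versus direction. Orientations live in $[0,\pi)$, but polarity doubles this to directions in $[0,2\pi)$, and $h$ is only positive on a half-circle. I will fix conventions first: parametrize $+$-polarity neurons by $\delta\in(-\pi/2,\pi/2]$ relative to the edge normal, and treat the rectified response as the positive part. Once that is done, the "never active" claim and the restriction of the window to $|\delta|<\pi/2$ both follow cleanly.
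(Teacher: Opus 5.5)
Your proposal is correct and takes essentially the same route as the paper: you reduce activity to $h(\theta-\theta_n)\ge\tau_{\mathrm{eff}}(c)$, use that $h$ is even and strictly decreasing from $h(0)=1$ to $h(\pi/2)=0$ to get a unique $\alpha(c)$ and the window $|\theta-\theta_n|\le\alpha(c)$, and invoke half-wave rectification for the opposite polarity. Beyond the paper's terse proof, you supply the log-derivative check of monotonicity and state the positive-part convention $[c\,h]_+^n$ explicitly, which the formula needs when $n$ is even.
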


\begin{proof}
Since $h$ is strictly decreasing on $[0, \pi/2)$ with
$h(0) = 1$ and $h(\pi/2) = 0$,
Equation~\eqref{eq:alpha_full} has a unique solution.
The neuron is active iff $h(\theta - \theta_n) \geq
\tau_{\mathrm{eff}}$, i.e., $|\theta - \theta_n| \leq \alpha$.
The opposite-polarity statement follows from rectification.
\end{proof}
\noindent
The effective threshold $\tau_{\mathrm{eff}}(c)$ controls how
wide the active window is: a lower threshold means more
orientations are active.  To understand its behavior, note
that $\tau_{\mathrm{eff}}(c)$ consists of two terms inside
the bracket:
\[
  \tau_{\mathrm{eff}}(c)
  = \bigl[\,\rho\,\sigma_c^{\,n}/c^n
    + \rho\,\mathcal{C}_n\,\bigr]^{1/n}.
\]
The first term, $\rho\,\sigma_c^n / c^n$, comes from the
semi-saturation constant in the normalization denominator: it
dominates at low contrast and falls as contrast increases.
The second term, $\rho\,\mathcal{C}_n$, comes from the
normalization pool and is independent of contrast: it sets a
floor that $\tau_{\mathrm{eff}}$ can never drop below, no
matter how high the contrast.  These two terms produce three
regimes:

\begin{enumerate}[label=(\roman*)]
\item \emph{Below visibility} ($c < c_{\mathrm{vis}}$).
  At very low contrast, the first term $\rho\,\sigma_c^n/c^n$
  is large, making $\tau_{\mathrm{eff}} > 1$.  Since
  $h(0) = 1$ is the maximum of $h$, no orientation satisfies
  $h(\delta) \geq \tau_{\mathrm{eff}}$, and the edge is
  invisible.  The visibility threshold $c_{\mathrm{vis}}$ is
  the contrast at which $\tau_{\mathrm{eff}} = 1$ exactly,
  found by setting $\rho\,(\sigma_c^n/c^n + \mathcal{C}_n) = 1$
  and solving for $c$:
  \begin{equation}\label{eq:c_vis}
    c_{\mathrm{vis}}
    = \sigma_c\,
    \Bigl(\frac{\rho}{1 - \rho\,\mathcal{C}_n}\Bigr)^{1/n}
    \approx 0.049
  \end{equation}
  ($\sim 5\%$ Michelson) with the standard parameters
  ($\sigma_c = 0.15$, $\rho = 0.10$, $n = 2$,
  $\mathcal{C}_2 \approx 0.50$).  Below this contrast, no V1
  neuron fires above its noise floor in response to the edge.

\item \emph{Moderate contrast}
  ($c$ somewhat above $c_{\mathrm{vis}}$).  The first term is
  still substantial, so $\tau_{\mathrm{eff}}$ is close to~$1$
  and the active window $\alpha$ is narrow.  In this regime,
  $h(\delta) \approx \cos\delta$ for small $\delta$ (the
  Gaussian factor is nearly flat near $\delta = 0$), and the
  window half-width is approximately
  $\alpha \approx \arccos(\tau_{\mathrm{eff}})$. This is the
  regime in which orientation channel fragmentation is most
  severe: the window is too narrow to span the tangent
  variation of a sinusoidal edge.

\item \emph{High contrast} ($c \gg \sigma_c$).
  The first term $\rho\,\sigma_c^n/c^n$ becomes negligible,
  and $\tau_{\mathrm{eff}}$ approaches the floor set by the
  second term:
  \begin{equation}\label{eq:tau_inf}
    \tau_\infty
    = \bigl(\rho\,\mathcal{C}_n\bigr)^{1/n}
    \approx 0.224.
  \end{equation}
  The active window therefore approaches
  \begin{equation}\label{eq:alpha_inf}
    \alpha_\infty
    = h^{-1}(\tau_\infty)
    \approx 28^\circ.
  \end{equation}
  No matter how high the contrast, the active window cannot
  exceed $2\alpha_\infty \approx 56^\circ$.  This ceiling
  arises because divisive normalization compresses the peak
  response: at high contrast, all neurons fire proportionally
  harder, but the normalization pool grows in step, preventing
  any neuron from rising far enough above threshold to activate
  neurons at distant orientations.  The Gaussian tuning
  envelope reinforces this ceiling by ensuring that neurons far
  from the edge normal produce negligible responses regardless
  of contrast.
\end{enumerate}

\subsection{Lateral connections: orientation and polarity}
\label{sec:lateral}

V1 lateral connections (property~(vi)) link neurons at nearby
positions with matched preferred orientations and matched
contrast polarity~\cite{bosking97,field93}.  A lateral
connection can therefore carry signal between two edge
positions $\bx_1$ and $\bx_2$ only if:
\begin{enumerate}[label=(\alph*)]
\item the edges at $\bx_1$ and $\bx_2$ have the \emph{same
  contrast polarity}; and
\item there exists an orientation $\theta_0$ within the
  active windows at \emph{both} positions.
\end{enumerate}
\noindent
Condition~(a) fails at a polarity reversal: the encoding
populations on the two sides of the reversal are entirely
disjoint, regardless of orientation.  Condition~(b) fails
when the active windows
$[\theta_{n,1} - \alpha, \theta_{n,1} + \alpha]$ and
$[\theta_{n,2} - \alpha, \theta_{n,2} + \alpha]$ are
disjoint, i.e., when $|\theta_{n,1} - \theta_{n,2}| > 2\alpha$.
\\
\\
The two conditions correspond to the two mechanisms of the
model: polarity channel separation (a) and orientation channel
fragmentation (b).

\section{Curvature Blindness}\label{sec:curvature_blindness}

\subsection{The Takahashi stimulus}

The standard curvature blindness stimulus consists of a
sinusoidal curve $y = A\sin(2\pi x/\lambda)$ drawn against a
uniform gray background of luminance $I_{\mathrm{bg}}$.  The
luminance of the line alternates between
$I_{\mathrm{dark}} < I_{\mathrm{bg}}$ and
$I_{\mathrm{light}} > I_{\mathrm{bg}}$, with the switch
occurring at each peak ($y = A$) and trough ($y = -A$).  Each
half wavelength segment, from a peak to the next trough  or
from a trough to the next peak, has a single, uniform
contrast polarity against the background.

\subsection{Polarity channel separation at peaks and troughs}

\begin{proposition}[Polarity segmentation]
\label{prop:polarity}
At each peak and trough of the Takahashi stimulus, the
contrast polarity reverses.  By condition~(a) of
Section~\ref{sec:lateral}, lateral connections cannot bridge
the polarity boundary.  The contour representation is therefore
segmented into half wavelength pieces, each encoded by a
single polarity neural population.
\end{proposition}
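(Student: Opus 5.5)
The plan is to reduce the statement to the stimulus construction, the rectification argument behind Proposition~\ref{prop:active_window}, and condition~(a) of Section~\ref{sec:lateral}. I would proceed in three steps.

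\emph{Step 1: polarity reverses at each peak and trough.} By construction of the Takahashi stimulus, the line luminance is $I_{\mathrm{dark}} < I_{\mathrm{bg}}$ on one half wavelength segment and $I_{\mathrm{light}} > I_{\mathrm{bg}}$ on the next, and the switch occurs exactly at $y=\pm A$. Fix one side of the line, say the upper boundary, with normal pointing toward the background. On a dark segment the luminance gradient points outward, from line to background, giving polarity $+$. On a light segment it points inward, giving polarity $-$. Because the normal direction $\theta_n$ varies continuously through the peak (the tangent is horizontal there), the sign flip is due purely to the luminance switch and not to any change of geometry. The same holds on the lower boundary with the roles exchanged. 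So the polarity $p$ is constant on each open half wavelength segment and reverses at every peak and trough.

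\emph{Step 2: the encoding population reverses with it.} By the rectification argument in Section~\ref{sec:linear_response} and the last clause of Proposition~\ref{prop:active_window}, at any edge point only neurons of matching polarity can be active. Their active set is $\{|\theta-\theta_n|\le\alpha(c)\}$ within that polarity population, and the opposite population is silent. It follows that the sets of active neurons on the two sides of a peak are disjoint, even though their orientation windows overlap there, since $\theta_n$ is continuous. The contrast $c$ may differ between the dark and light segments, but this only changes $\alpha$ and does not affect disjointness.

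\emph{Step 3: no lateral link crosses the boundary, so each segment is one component.} Condition~(a) says a lateral connection links $\bx_1,\bx_2$ only if they have the same polarity. Any chain of lateral links from a point on one segment to a point on the adjacent segment must therefore contain a link between points of opposite polarity, which is impossible. The connected pieces of the contour representation are thus contained in single half wavelength segments, and each is encoded by a single polarity population by Step~1.

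The main obstacle is a modelling subtlety rather than a computation. A lateral link joins nearby but not necessarily adjacent points, so one might worry about a link skipping over a segment to a same polarity segment one wavelength away, or joining the upper edge of one segment to the lower edge of the next, which shares its polarity. I would handle this by restricting, as the model implicitly does, to connections along a single edge of the contour between points at separation smaller than $\lambda/2$. Under that restriction any link either stays within a segment or crosses exactly one polarity reversal. A further subtlety is the reversal point itself. I would treat it as a measure zero boundary at which neither polarity is well defined, and note that the statement concerns only the interior pieces.
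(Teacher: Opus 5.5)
Your proposal is correct and follows the paper's route. The stimulus construction puts a polarity reversal at every peak and trough, property~(ii) (via rectification) makes the encoding populations on either side disjoint, and condition~(a) then rules out any lateral link across that boundary. Your per-edge bookkeeping is more careful than the paper's proof, which simply assigns a single polarity to each whole segment: you correctly notice that the upper edge of one segment and the lower edge of the next have the same gradient direction near a peak, so condition~(a) alone does not separate them, and your explicit single-edge restriction is exactly the implicit assumption the paper's argument relies on.
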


\begin{proof}
The segment from peak to trough (say) has luminance
$I_{\mathrm{dark}} < I_{\mathrm{bg}}$: its edges are
encoded by the $-$ polarity population.  The next segment
(trough to peak) has luminance
$I_{\mathrm{light}} > I_{\mathrm{bg}}$: its edges are
encoded by the $+$ population.  These populations are
disjoint (property~(ii)), so no lateral connection spans the
boundary.
\end{proof}

\begin{remark}[Why the illusion requires a gray background]
\label{rem:gray_background}
On a \emph{white} background
($I_{\mathrm{bg}} \gg I_{\mathrm{light}} > I_{\mathrm{dark}}$),
both line luminances are darker than the background.
Both segments therefore have the \emph{same} contrast polarity
($-$) relative to the background.  No polarity reversal occurs
at peaks and troughs, the lateral chain is unbroken, and the
contour is perceived as smooth.  On a \emph{black} background,
both luminances are lighter than the background, both have
polarity~$+$, and the same argument applies.
\\
\\
On a \emph{gray} background where
$I_{\mathrm{dark}} < I_{\mathrm{bg}} < I_{\mathrm{light}}$,
the dark segments have polarity~$-$ and the light segments
have polarity~$+$.  The polarity reverses at every peak and
trough, the lateral chain is broken, and the segmentation of
Proposition~\ref{prop:polarity} is in effect.  The background
dependence of the illusion is thus entirely explained by
whether the background luminance falls between or outside the
two line luminances.
\end{remark}

\subsection{Geometry of a half wavelength segment}

Each half wavelength segment runs from a peak to the next
trough (or vice versa) and has uniform contrast polarity.
Within such a segment, the geometry is as follows.

\begin{proposition}[Half segment geometry]
\label{prop:half_segment}
Consider the segment of $y = A\sin(2\pi x/\lambda)$ from a
peak at $x_0 = \lambda/4$ to the next trough at
$x_1 = 3\lambda/4$.
\begin{enumerate}[label=(\roman*)]
\item The tangent angle $\theta_{\mathrm{tan}}(x)
  = \arctan(y'(x))$ varies from $0$ at the peak to
  $-\theta_{\max}$ at the midpoint
  $x_m = \lambda/2$ and back to $0$ at the trough,
  where $\theta_{\max} = \arctan(2\pi A/\lambda)$.

\item The midpoint $x_m$ is an \emph{inflection point}:
  $y''(x_m) = 0$.  Here the curvature is zero and
  $|\theta_{\mathrm{tan}}|$ is maximal; the curve is
  locally straightest.

\item The edge normal
  $\theta_n(x) = \theta_{\mathrm{tan}}(x) + \pi/2$
  varies within
  $[\pi/2 - \theta_{\max},\; \pi/2]$
  over this segment (and symmetrically within
  $[\pi/2,\; \pi/2 + \theta_{\max}]$ over the adjacent
  trough to peak segment).

\item The Michelson contrast $c$ between the line and the
  background is uniform along the segment (single polarity,
  uniform luminance).  Therefore the active window has the
  same width $2\alpha(c)$ everywhere within the segment.
\end{enumerate}
\end{proposition}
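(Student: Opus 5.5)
The proof is a direct calculus computation on $y(x) = A\sin(2\pi x/\lambda)$, taking the four items in order and using only Proposition~\ref{prop:active_window} beyond elementary facts. Differentiating gives
\[
  y'(x) = \frac{2\pi A}{\lambda}\cos\Bigl(\frac{2\pi x}{\lambda}\Bigr),
  \qquad
  y''(x) = -\frac{4\pi^2 A}{\lambda^2}\sin\Bigl(\frac{2\pi x}{\lambda}\Bigr).
\]
On $[\lambda/4, 3\lambda/4]$ the argument $u = 2\pi x/\lambda$ runs over $[\pi/2, 3\pi/2]$. There $\cos u$ starts at $0$, decreases monotonically to $-1$ at $u = \pi$ (that is, $x_m = \lambda/2$), and increases back to $0$. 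Since $\arctan$ is odd and strictly increasing, $\theta_{\mathrm{tan}} = \arctan(y')$ inherits this shape. It therefore goes from $0$ to $\arctan(-2\pi A/\lambda) = -\theta_{\max}$ and back to $0$, which proves~(i).

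For~(ii), $y''(x_m) \propto \sin\pi = 0$, and $y''$ changes sign at $x_m$, because $\sin u$ is positive just before $u = \pi$ and negative just after. So $x_m$ is a genuine inflection point. The curvature $\kappa = |y''|/(1+y'^2)^{3/2}$ vanishes there. The maximality of $|\theta_{\mathrm{tan}}|$ at $x_m$ was already obtained in~(i), since $|\cos u|$ peaks at $u = \pi$.

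For~(iii), add $\pi/2$ to the range $[-\theta_{\max}, 0]$ from~(i) to get $\theta_n \in [\pi/2 - \theta_{\max}, \pi/2]$. For the adjacent segment $[3\lambda/4, 5\lambda/4]$, $\cos u \geq 0$, so $\theta_{\mathrm{tan}} \in [0, \theta_{\max}]$ and $\theta_n \in [\pi/2, \pi/2 + \theta_{\max}]$ by the same argument. One point needs a remark: a line has two edges, one on each side, whose outward normals differ by $\pi$. Because orientation is taken modulo $\pi$ (with polarity tracked separately per property~(ii)), both edges give the same interval. The sign convention of ``pointing from darker to lighter'' only selects which polarity population responds, and does not change the orientation range.

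For~(iv), the segment's luminance is constant ($I_{\mathrm{dark}}$ or $I_{\mathrm{light}}$) and the background is uniform, so the Michelson contrast $c$ is the same at every edge point. Proposition~\ref{prop:active_window} makes $\alpha$ a function of $c$ alone, so the half-width $\alpha(c)$, and hence the window width $2\alpha(c)$, is constant along the segment. Only the window's center $\theta_n(x)$ moves.

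The main obstacle is conceptual rather than computational: stating clearly in~(iii) that the normal is taken modulo $\pi$ for orientation, with polarity carried separately, so that both sides of the drawn line fall under the same claim. Everything else is routine monotonicity of $\cos$ and $\arctan$.
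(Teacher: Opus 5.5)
Your proof is correct and follows the same route as the paper: differentiate $y$, evaluate $y'$ at the peak, midpoint and trough, check $y''(x_m)=0$, and read off (iii)--(iv). Your additions are sound refinements of details the paper leaves to ``follow directly'': the monotonicity of $\cos u$ on $[\pi/2,3\pi/2]$ (which, beyond the paper's three-point evaluation, is what actually justifies the full range in (iii) and the maximality of $|\theta_{\mathrm{tan}}|$ in (ii)), the sign change of $y''$, and the modulo-$\pi$ remark about the two edges of the line.
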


\begin{proof}
The derivative is $y'(x) = 2\pi A\cos(2\pi x/\lambda)/\lambda$.
At $x_0 = \lambda/4$: $\cos(\pi/2) = 0$, so $\theta_{\mathrm{tan}} = 0$.
At $x_m = \lambda/2$: $\cos(\pi) = -1$, so
$\theta_{\mathrm{tan}} = \arctan(-2\pi A/\lambda) = -\theta_{\max}$.
At $x_1 = 3\lambda/4$: $\cos(3\pi/2) = 0$, so $\theta_{\mathrm{tan}} = 0$.
The second derivative $y''(x_m) = -(2\pi/\lambda)^2 A\sin(\pi) = 0$
confirms the inflection point.  Parts~(iii)--(iv) follow directly.
\end{proof}

\subsection{Orientation channel fragmentation within a segment}

\begin{proposition}[Within segment fragmentation]
\label{prop:fragmentation}
When $\theta_{\max} > 2\alpha(c)$, the active orientation
windows at the endpoints (peak/trough, where
$\theta_n = \pi/2$) and at the inflection point (where
$\theta_n = \pi/2 - \theta_{\max}$) are disjoint.  No single
orientation channel spans the full half wavelength segment.
\end{proposition}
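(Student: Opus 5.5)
The argument combines Proposition~\ref{prop:active_window}, which describes the active set at each edge point, with the disjointness criterion~(b) of Section~\ref{sec:lateral}. Proposition~\ref{prop:half_segment}(iv) says the contrast $c$ is uniform along the segment, so every point carries a window of the same half-width $\alpha=\alpha(c)$. By Proposition~\ref{prop:active_window}, that window is centred on the local edge normal.

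\textbf{Step 1: endpoints versus inflection point.} At a peak or trough, $\theta_n=\pi/2$, so the active set there is the interval $W_0=[\pi/2-\alpha,\pi/2+\alpha]$. At the inflection point $x_m$, Proposition~\ref{prop:half_segment}(i),(iii) gives $\theta_n=\pi/2-\theta_{\max}$, so the active set is $W_m=[\pi/2-\theta_{\max}-\alpha,\pi/2-\theta_{\max}+\alpha]$. The upper end of $W_m$ lies below the lower end of $W_0$ exactly when $\pi/2-\theta_{\max}+\alpha<\pi/2-\alpha$, that is, when $\theta_{\max}>2\alpha$. Under that hypothesis $W_0\cap W_m=\emptyset$, which is the first claim.

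\textbf{Step 2: a modular-arithmetic check.} Orientations live in $[0,\pi)$, so I would also confirm that the two windows do not meet after wrapping around mod $\pi$. For this I would use $\theta_{\max}<\pi/2$, which holds because $\theta_{\max}$ is an arctangent, together with $\alpha\le\alpha_\infty\approx28^\circ$ from the high-contrast analysis. With these bounds, $W_0$ and $W_m$ both sit inside an interval of length less than $\pi$. Their circular separation is therefore $\min(\theta_{\max},\pi-\theta_{\max})-2\alpha$, and I need this to be positive. If $\theta_{\max}\le\pi/2$, the minimum is $\theta_{\max}$ itself, so the separation is $\theta_{\max}-2\alpha>0$ and no wrap-around case arises.

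\textbf{Step 3: no channel spans the segment.} Suppose a single channel $\theta_0$ spanned the whole half-wavelength segment. It would have to be active at every point, and in particular at the peak endpoint and at $x_m$. Then $\theta_0\in W_0\cap W_m=\emptyset$, which is impossible. If one wants to say more, the continuity of $\theta_n(x)$ from Proposition~\ref{prop:half_segment}(i) shows that the set of points where $\theta_0$ is active is a proper sub-interval of the segment. By condition~(b), lateral chains along a fixed channel cannot link the endpoints to the centre.

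\textbf{Main obstacle.} The key inequality is immediate, so the hardest part is interpretive. One has to pin down that ``spans the segment'' means ``is active at all points''. The alternative reading is ``linked via lateral connections''. Under that reading, a chain that switches channel could in principle drift gradually across orientations, since adjacent points have overlapping windows. I would therefore state explicitly that the claim concerns a single fixed channel, exactly as the proposition is worded.
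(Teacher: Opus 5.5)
Your proof is correct and follows the paper's own argument: the paper also just compares the windows centred at $\pi/2$ and $\pi/2-\theta_{\max}$ and notes that they are disjoint when $\theta_{\max}>2\alpha$. It adds only the remark that every point stays visible, since $\theta_n(x)$ lies in its own window, so the gap is in orientation rather than space; your mod-$\pi$ wrap-around check and your explicit contradiction for the ``no single channel'' claim are extra rigor the paper leaves implicit. One small slip in your optional aside: $\theta_n$ decreases to $\pi/2-\theta_{\max}$ at $x_m$ and then increases again, so the set where a fixed $\theta_0$ is active is in general a union of up to two intervals symmetric about $x_m$, not a single sub-interval (and continuity alone would not give an interval anyway).
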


\begin{proof}
The window at the peak is centered at $\pi/2$ with half-width
$\alpha$; the window at the inflection is centered at
$\pi/2 - \theta_{\max}$ with half-width $\alpha$.  These are
disjoint when $\theta_{\max} > 2\alpha$.
\\
\\
Every point within the segment remains visible: $\theta_n(x)$
varies continuously, and at each $x$ the orientation
$\theta = \theta_n(x)$ lies within the local active window.
The gap is in the orientation channel, not in space.
\end{proof}

\subsection{The role of inflection points}

\begin{proposition}[Inflection points anchor straight line perception]
\label{prop:inflection}
When fragmentation holds ($\theta_{\max} > 2\alpha$), the
orientation channel centered near
$\theta_0 = \pi/2 - \theta_{\max}$ sees a fragment of the
curve surrounding the inflection point.  This fragment is the
straightest part of the segment: it is centered on the point
of zero curvature, and the tangent angle varies by at most
$2\alpha$ across it.
\end{proposition}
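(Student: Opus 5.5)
The plan is to make precise what ``the fragment seen by the channel $\theta_0$'' is, and then read off its two properties from Proposition~\ref{prop:half_segment} together with Proposition~\ref{prop:active_window}. By Proposition~\ref{prop:active_window} and the uniform contrast in Proposition~\ref{prop:half_segment}(iv), the channel $\theta_0 = \pi/2-\theta_{\max}$ is active at a point $x$ of the segment if and only if $|\theta_n(x)-\theta_0|\le\alpha$. Writing $\theta_n(x)=\pi/2+\arctan y'(x)$, this is equivalent to
\[
\arctan y'(x) \;\le\; -\theta_{\max}+\alpha .
\]
The lower bound $\arctan y'(x)\ge -\theta_{\max}$ holds automatically, by part~(i) of Proposition~\ref{prop:half_segment}. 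I therefore define the fragment as the set $F=\{x\in[\lambda/4,3\lambda/4] : \arctan y'(x)\le -\theta_{\max}+\alpha\}$.

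The first step is to show that $F$ is an interval symmetric about $x_m=\lambda/2$. Since $y'(x)=(2\pi A/\lambda)\cos(2\pi x/\lambda)$ and $\arctan$ is increasing, membership in $F$ reduces to $\cos(2\pi x/\lambda)\le -\kappa$ for a constant $\kappa$. On $[\lambda/4,3\lambda/4]$ the function $\cos(2\pi x/\lambda)$ decreases to its minimum $-1$ at $x_m$ and then increases again, and it is symmetric about $x_m$. Hence $F=[x_m-\delta,\,x_m+\delta]$ for some $\delta>0$. The set is nonempty because $x_m\in F$. It is a proper subinterval because the endpoints have $\theta_n=\pi/2$, and by the hypothesis $\theta_{\max}>2\alpha>\alpha$ these endpoints lie outside the window; this is where Proposition~\ref{prop:fragmentation}'s hypothesis enters. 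So the fragment is centered on the inflection point, which by Proposition~\ref{prop:half_segment}(ii) is the point of zero curvature.

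The second step is the angular variation bound. On $F$, the quantity $\theta_{\tan}$ lies in $[-\theta_{\max},-\theta_{\max}+\alpha]$, so the tangent angle varies by at most $\alpha$, which is within the stated $2\alpha$. The weaker bound $2\alpha$ also holds for any channel near $\theta_0$ whose window contains $\theta_n(x_m)$: such a window is an interval of length $2\alpha$, so the $\theta_n$ values it sees vary by at most $2\alpha$. This justifies the phrase ``centered near''.

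The main obstacle is making ``straightest part of the segment'' rigorous. I would formalize it as follows: among all subintervals of the segment on which the tangent angle varies by a fixed amount (equivalently, among all single-channel fragments), the one centered at $x_m$ is the longest. Equivalently, it has the smallest mean absolute curvature $|\Delta\theta_{\tan}|/\text{length}$. The reason is that $|d\theta_{\tan}/dx| = |y''|/(1+y'^2)$ vanishes at $x_m$ and increases toward the endpoints. I would try to verify that monotonicity by differentiating $|y''|/(1+y'^2)$ in $u=2\pi x/\lambda$, which amounts to analyzing $\sin u/(1+k^2\cos^2 u)$ for $k=2\pi A/\lambda$. This function is increasing on $(0,\pi/2)$ whenever $k^2\le 1$ or so. If monotonicity fails for large $k$, I would fall back to the weaker claim: the fragment contains the unique zero of curvature, and curvature is smallest there.
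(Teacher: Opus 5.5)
Your proposal is correct. Its skeleton is the paper's: the fragment is $\{x : |\theta_n(x)-\theta_0|\le\alpha\}$, it contains $x_m$ because $\theta_n(x_m)=\theta_0$, the window width bounds the tangent variation, and $\kappa(x_m)=0$. The difference is rigor.

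The paper's proof is a short sketch. It goes straight from $\theta_n(x_m)=\theta_0$ to ``$x_m$ is in the center of this fragment.'' That only locates $x_m$ in orientation space, and in fact at the lower edge of the realized orientations, since $\theta_n\ge\theta_0$ on the segment. Spatial centering is therefore not justified there. Your reduction of membership to $\cos(2\pi x/\lambda)\le$ const, combined with the symmetry of the cosine about $x_m$, is exactly what supplies it.

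Your proof adds three further things:
\begin{itemize}
\item a sharper range bound, $\alpha$ instead of the paper's $2\alpha$;
\item the place where the hypothesis $\theta_{\max}>2\alpha$ is actually used (to make the fragment a proper subinterval), which the paper's proof never invokes;
\item via the same sublevel-set argument, the fact that every channel whose window contains $\theta_0$ sees a fragment symmetric about $x_m$, which makes ``centered near'' precise.
\end{itemize}
The paper's version buys brevity and reads as if it applied to any curve with inflections. Yours is an actual proof for the sinusoid and gives ``straightest'' a definite meaning.

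In that last step, your hedge is unnecessary. With $v$ measured from the inflection point, $\sin v/(1+k^2\cos^2 v)$ on $[0,\pi/2]$ has an increasing numerator and a decreasing positive denominator. It is therefore increasing for every $k$; explicitly, its derivative is $\cos v\,(1+k^2\cos^2 v+2k^2\sin^2 v)/(1+k^2\cos^2 v)^2\ge 0$. The fallback is never needed.

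This monotonicity, together with the symmetry of $|d\theta_{\mathrm{tan}}/dx|$ about $x_m$, gives your longest-interval claim. Moving an interval of fixed length so that its midpoint approaches $x_m$ does not increase $\int|d\theta_{\mathrm{tan}}|$. Hence, for fixed total variation, the centered interval is longest.

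Two small corrections:
\begin{itemize}
\item Your mean absolute curvature should be total variation divided by length, not $|\Delta\theta_{\mathrm{tan}}|/\text{length}$. The net change is zero on the symmetric fragment, where the tangent goes down to $-\theta_{\max}$ and back (range $\alpha$, total variation $2\alpha$).
\item Your constant $\kappa$ in $\cos(2\pi x/\lambda)\le-\kappa$ clashes with the paper's curvature $\kappa$.
\end{itemize}
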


\begin{proof}
The fragment visible at orientation $\theta_0$ consists of
positions $x$ where $|\theta_n(x) - \theta_0| \leq \alpha$.
At the inflection point $x_m$,
$\theta_n(x_m) = \pi/2 - \theta_{\max}$, which equals
$\theta_0$; so $x_m$ is in the center of this fragment.
The tangent angle within the fragment varies by at most
$2\alpha$, and since $\kappa(x_m) = 0$ the curve is locally
straight at the center of the fragment.
\end{proof}

\begin{remark}[Why inflection points are necessary]
\label{rem:inflection_necessary}
Inflection points play a specific structural role: they
provide the locally straight anchors that the visual system
interprets as the ``segments'' of a zigzag.  Without inflection
points, for example along a circular arc where curvature
never changes sign, fragmentation still breaks the orientation
representation into pieces, but every piece is a curved arc
of identical shape.  There is no locally straight region to
anchor a straight segment perception.  The model therefore predicts
that the illusion requires \emph{sign changing curvature}
(i.e., inflection points between successive polarity reversals),
not merely polarity switching and moderate contrast.
\end{remark}

\subsection{The zigzag perception}

\begin{proposition}[Piecewise linear perception]
\label{prop:zigzag}
When both polarity segmentation
(Proposition~\ref{prop:polarity}) and within segment
fragmentation (Proposition~\ref{prop:fragmentation}) hold,
the visual system perceives a zigzag.
\end{proposition}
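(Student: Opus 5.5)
The plan is to assemble the zigzag from the two earlier mechanisms. Polarity segmentation supplies the corners and the vertex positions. Fragmentation together with the inflection anchors supplies the straight edges between them. The argument only needs to show that the resulting representation is piecewise linear, with vertices at the peaks and troughs and alternating slopes; this is the defining shape of a zigzag.

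First, the vertices. By Proposition~\ref{prop:polarity}, the contour representation splits at every peak and trough into half wavelength pieces, each carried by a single polarity population. No lateral connection crosses these boundaries, so nothing in the lateral chain propagates the information that the tangent turns continuously through $0$ there. Each peak or trough is therefore represented only as the meeting point of two independently encoded segments. Its location is still fixed, because every point stays visible (proof of Proposition~\ref{prop:fragmentation}).

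Second, the segments. Fix one half wavelength piece, say peak to trough. By Proposition~\ref{prop:fragmentation}, when $\theta_{\max} > 2\alpha(c)$ no single orientation channel integrates the whole piece. The only channel whose lateral chain covers a fragment centred on the piece is the one at $\theta_0 = \pi/2 - \theta_{\max}$. By Proposition~\ref{prop:inflection}, this fragment surrounds the inflection point $x_m$, where curvature is zero and the tangent angle varies by at most $2\alpha$. I would then argue that the integrated percept of the piece is the straight line through $(x_m, 0)$ at tangent angle $-\theta_{\max}$. The justification is that this is the only extended, orientation-consistent fragment. The fragments near the endpoints lie in different channels with no lateral link to it, so they cannot impose a consistent curvature. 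To pin down the segment, I would assume it is the line joining the two fixed vertices $(\lambda/4, A)$ and $(3\lambda/4, -A)$, which passes through $(x_m,0)$ by the symmetry of the sinusoid. By Proposition~\ref{prop:half_segment}(iii), the adjacent piece is the mirror image, with a window centred at $\pi/2 + \theta_{\max}$. The slopes therefore alternate in sign, and consecutive straight segments meet at a nonzero angle at each vertex. This gives a zigzag.

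The main obstacle is the second step. Moving from ``the only channel spanning a fragment is nearly straight'' to ``the segment is perceived as a straight line'' needs a readout principle that the model does not yet state. A natural candidate is that perceived shape within a polarity segment is the simplest curve consistent with the dominant lateral chain. I would make this an explicit assumption: curvature is perceived only when a single channel's lateral chain registers a systematic orientation change beyond its window. Under this assumption, fragmentation implies zero perceived curvature inside each segment. The consistency check is that the hypothesis $\theta_{\max} > 2\alpha$ becomes false for high contrast, where $\alpha \to \alpha_\infty$, or for small amplitude. In those cases curvature is recovered, which matches the conditions listed in the abstract.
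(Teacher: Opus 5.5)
Your structure matches the paper's: polarity breaks at peaks and troughs give the corners, the inflection-centred fragment gives straightness, and a ``simplest consistent contour'' readout yields the triangular wave. You make that readout an explicit assumption, whereas the paper states it inline and concedes in its Limitations that it is unformalized. The problem is your second step, which specifies the segment in two incompatible ways. The line through $(x_m,0)$ at tangent angle $-\theta_{\max}$ has slope $-2\pi A/\lambda$. The chord from $(\lambda/4,A)$ to $(3\lambda/4,-A)$ has slope $-4A/\lambda$. Both pass through $(x_m,0)$, but they coincide only when $A=0$. If you keep the inflection tangents, consecutive ones (at $x=0$ and $x=\lambda/2$) meet at $(\lambda/4,\pi A/2)$, overshooting the peak by a factor $\pi/2$. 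That contradicts your claim that the vertices sit at the peaks and troughs. The paper never gives the perceived segment the inflection orientation. Its constraints are only a break at each peak and trough and an approximately straight segment through each inflection point. The simplest contour meeting them is the chord, with slope $\pm 4A/\lambda$ and corner angle $2\arctan(4A/\lambda)$. So drop ``at tangent angle $-\theta_{\max}$'' and the claim that the inflection channel fixes the segment's orientation. The inflection fragment supplies straightness and one point of the line; the vertices supply the slope.

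Second, your readout assumption, as worded, cannot do its job. A channel at $\theta$ is active only where $|\theta_n(x)-\theta|\le\alpha$, so the normals along its lateral chain never vary by more than $2\alpha$. No channel can ever register an orientation change beyond its window. Read literally, the assumption forbids curvature perception at every contrast, including the regime $\theta_{\max}\le 2\alpha$ where your consistency check says curvature is recovered. State it instead as: curvature within a polarity segment is perceived only if a single channel's lateral chain spans the segment's whole normal range $[\pi/2-\theta_{\max},\pi/2]$. This happens exactly when $\theta_{\max}\le 2\alpha$ (take the channel at $\pi/2-\theta_{\max}/2$). Fragmentation then gives zero perceived curvature inside each segment and the consistency check goes through. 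With both repairs your argument coincides with the paper's.
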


\begin{proof}
Polarity segmentation breaks the contour at every peak and
trough.  Within each half wavelength segment, fragmentation
prevents any single orientation channel from tracking the full
variation from $\theta_n = \pi/2$ at the endpoints to
$\theta_n = \pi/2 - \theta_{\max}$ (or
$\pi/2 + \theta_{\max}$ for the adjacent segment) at the
inflection.
The inflection centered fragment anchors a straight percept
(Proposition~\ref{prop:inflection}).  The fragments near the
endpoints see a nearly horizontal edge
($\theta_n \approx \pi/2$), consistent with a corner at the
peak or trough.
\\
\\
The visual system therefore receives: a polarity break
(segment boundary) at each peak and trough, flanked by
approximately straight segments passing through the inflection
points.  The simplest contour consistent with these constraints
is a triangular wave i.e a zigzag with corners at the peaks
$(y = A)$ and troughs $(y = -A)$, connected by straight
segments with slope $\pm 4A/\lambda$ and corner angle
$2\arctan(4A/\lambda) \approx 8A/\lambda$ for small amplitude.
\end{proof}

\subsection{How curvature and amplitude affect the illusion}

\begin{remark}\label{rem:large_amplitude}
For large amplitude ($A \sim \lambda$), $\theta_{\max}$
is large and the fragmentation condition is easily satisfied,
yet the illusion vanishes.  The resolution involves the
\emph{spatial length} of the fragments that each orientation
channel sees.
\\
\\
A neuron tuned to orientation $\theta_0$ sees all curve
positions $x$ where $|\theta_n(x) - \theta_0| \leq \alpha$.
The edge normal $\theta_n$ changes along the curve at a rate
determined by the curvature: specifically,
$d\theta_n/ds = \kappa$, where $s$ is arc length and $\kappa$
is the signed curvature.  If $\kappa$ is approximately
constant over the fragment, then the edge normal rotates by
$|\kappa| \cdot \Delta s$ over a fragment of arc length
$\Delta s$.  The fragment spans the full active window
$2\alpha$ in orientation, so
$|\kappa| \cdot \Delta s \approx 2\alpha$, giving
\begin{equation}\label{eq:fragment_length}
  \Delta s \;\approx\; \frac{2\alpha}{|\kappa|}.
\end{equation}
At large amplitude ($A \sim \lambda$), the curvature at the
peaks and troughs is large
($|\kappa| \approx (2\pi/\lambda)^2 A$), so each fragment
is spatially short.  The piecewise-linear approximation then
consists of many tiny segments, indistinguishable from a smooth
curve at the spatial resolution of the visual system.
\\
\\
However, curvature that is too \emph{small} also undermines
the zigzag, for the opposite reason.  When $|\kappa|$ is
small, Equation~\eqref{eq:fragment_length} gives a long
fragment, and the tangent angle varies by the full $2\alpha$
across it.  Although $2\alpha$ is small in absolute terms
(at moderate contrast, $2\alpha \lesssim 56^\circ$), the
curvature within the fragment is nonzero, the fragment is an
arc, not a line.  If the fragment is long enough, this
residual curvature is perceptible: the visual system can
detect that the fragment bends, even though it cannot compare
orientations \emph{across} fragments.  The straight segment
perception requires fragments that are both long enough to be
individually visible and short enough that their internal
curvature is below the detection threshold for bending.
\\
\\
The zigzag illusion therefore occupies an intermediate
regime: curvature must be high enough that each fragment is
short enough to appear straight, but low enough that each
fragment is long enough to be perceived as a distinct segment
rather than an indistinguishable piece of a smooth curve.
\end{remark}

\section{Predictions and Limitations}\label{sec:discussion}

\subsection{Predictions}

The model yields three necessary conditions and three
specific predictions.
\\
\\
\emph{Necessary conditions:}
\\
\\
The illusion requires:
(a)~contrast polarity reversals along the curve (to create
segment boundaries at peaks/troughs); (b)~moderate contrast
against the background (to narrow the active orientation
window); and (c)~inflection points between successive polarity
reversals (to anchor straight segment perception within each
half wavelength piece).
\\
\\
\emph{Predictions:}
\\
\\
(i)~\emph{Generalization to other curves.}  Any
polarity alternating curve with sign changing curvature should
show the illusion: damped sinusoids, S-curves, and
higher order oscillations.  Curves without inflection points
between polarity switches (arcs of circles or ellipses with
polarity alternation) should \emph{not} produce a zigzag
percept, though they may appear as disconnected arcs.
\\
\\
(ii)~\emph{Contrast window.}  Within each half wavelength
segment, the fragmentation condition
$\theta_{\max} > 2\alpha(c)$ defines a contrast range
$[c_{\mathrm{vis}}, c_{\mathrm{crit}}]$ in which the illusion
occurs.  The critical contrast $c_{\mathrm{crit}}$ depends on
$A/\lambda$ and exists whenever
$\theta_{\max} < 2\alpha_\infty \approx 56^\circ$
($A/\lambda \lesssim 0.24$).
\\
\\
(iii)~\emph{Amplitude dependent contrast window.}  The upper
critical contrast $c_{\mathrm{crit}}$ should increase with
$A/\lambda$: larger amplitude sinusoids should exhibit the
illusion over a wider contrast range.

\subsection{Limitations}

The model operates at the level of V1 and does not account for
curvature selective neurons in V2/V4 that might partially
compensate for orientation channel fragmentation.
\\
\\
The claim that lateral connections are polarity specific is
supported by the psychophysics of contour
integration~\cite{field93} and by the physiology of horizontal
connections~\cite{bosking97}, but the degree of polarity
selectivity in lateral connections has not been measured as
precisely as the orientation selectivity.  If lateral
connections have partial polarity tolerance, the polarity break
at peaks/troughs would be leaky rather than absolute, and the
illusion strength would degrade gradually rather than switching
sharply.
\\
\\
The model uses a single achromatic contrast channel; extending
to chromatic contrast (which also produces the
illusion~\cite{takahashi17}) would require replacing the scalar
Michelson contrast with a vector-valued color contrast, but the
mechanism is unchanged.
\\
\\
The model does not specify the mechanism that
converts fragmented orientation representations into a
conscious zigzag perception.  We have argued that the zigzag is
the simplest contour consistent with the available information
(polarity breaks at peaks/troughs, approximately straight
fragments at inflection points), but a complete model would
need to formalize this inference step.

\begin{figure}[ht!]
\centering
\includegraphics[width=0.75\textwidth]{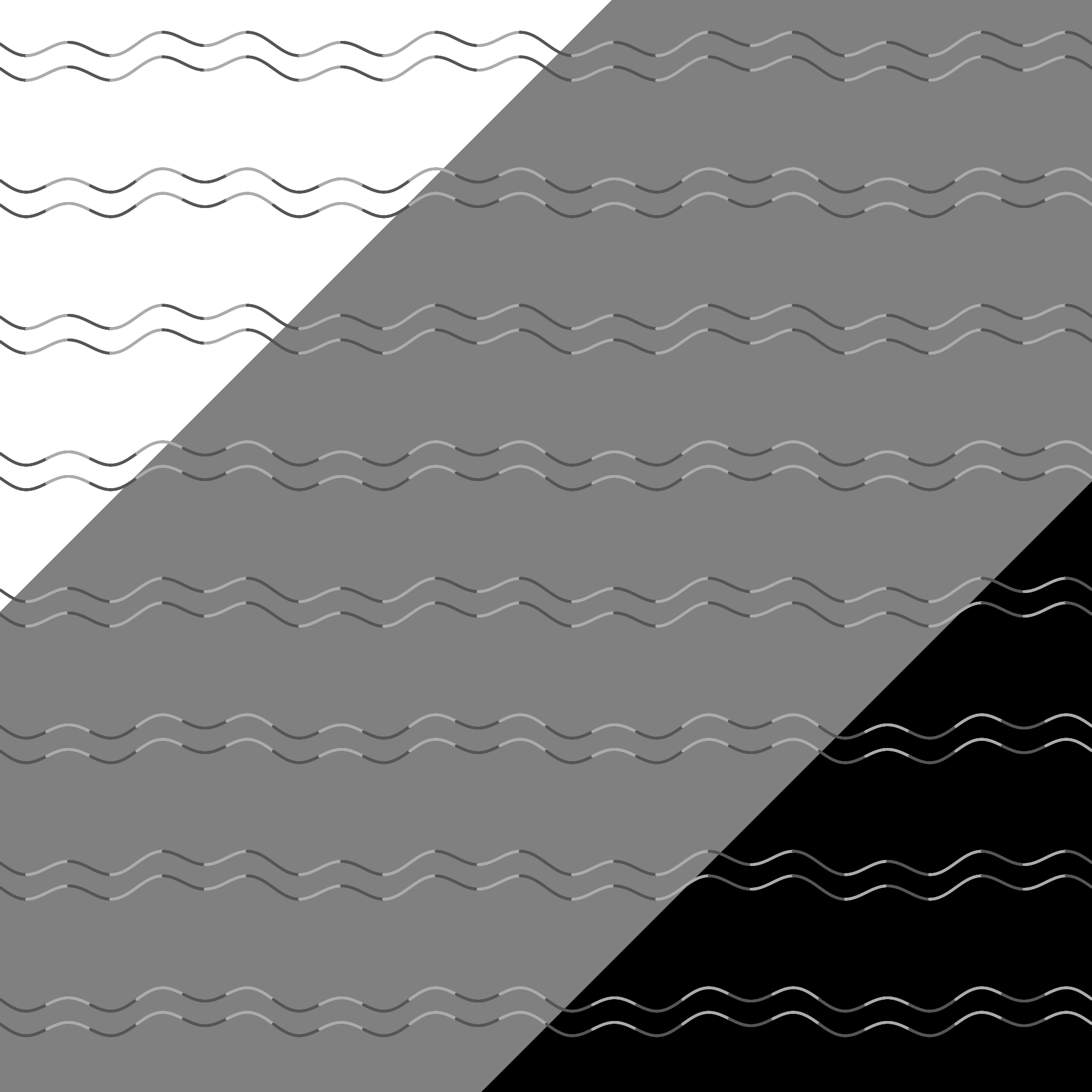}
\caption{Curvature blindness generalized to a non-sinusoidal
curve. \\The  waveform is $y = 4\sin(2\pi x/\lambda) + 5\sin(6\pi x/\lambda).$ The longer segments between extrema still read as curved while the shorter segments are straight. This is consistent with the model as the longer segments have lower average curvature and hence a larger visible window allowing residual curvature to be seen.
}
\label{fig:twofreq}
\end{figure}

\end{document}